\begin{document}


\title{ Analysing Cycloids using Linear Algebra}
\author{R\"udiger Valk}
\institute{University of Hamburg, Department of Informatics\\
Hamburg, Germany \\
  \email{ruediger.valk@uni-hamburg.de}}
\maketitle

\begin{abstract}
Cycloids are particular Petri nets for modelling processes of actions or events. They belong to the fundaments of Petri's general systems theory and have very different interpretations, ranging from Einstein's relativity theory and  elementary information processing gates to the modelling of interacting sequential processes. This article contains previously unpublished proofs of cycloid properties using linear algebra. 
\end{abstract}

\begin{keywords}
Structure of Petri Nets, 
Cycloids, 
Linear Algebra,
Cycles in the grafic Structure
\end{keywords}


\section{Introduction}\label{sec-intro}
Cycloids have been introduced  by C.A. Petri in \cite{Petri-NTS} in the section on physical spaces, using as examples firemen carrying the buckets with water to extinguish a fire, the shift from Galilei to Lorentz transformation and the representation of elementary logical gates like Quine-transfers.
Based on  formal descriptions of cycloids in \cite{Kummer-Stehr-1997} and \cite{fenske-da} a more elaborate formalization is given in \cite{Valk-2019}.
\begin{figure}[htbp]
 \begin{center}
        \includegraphics [scale = 0.31]{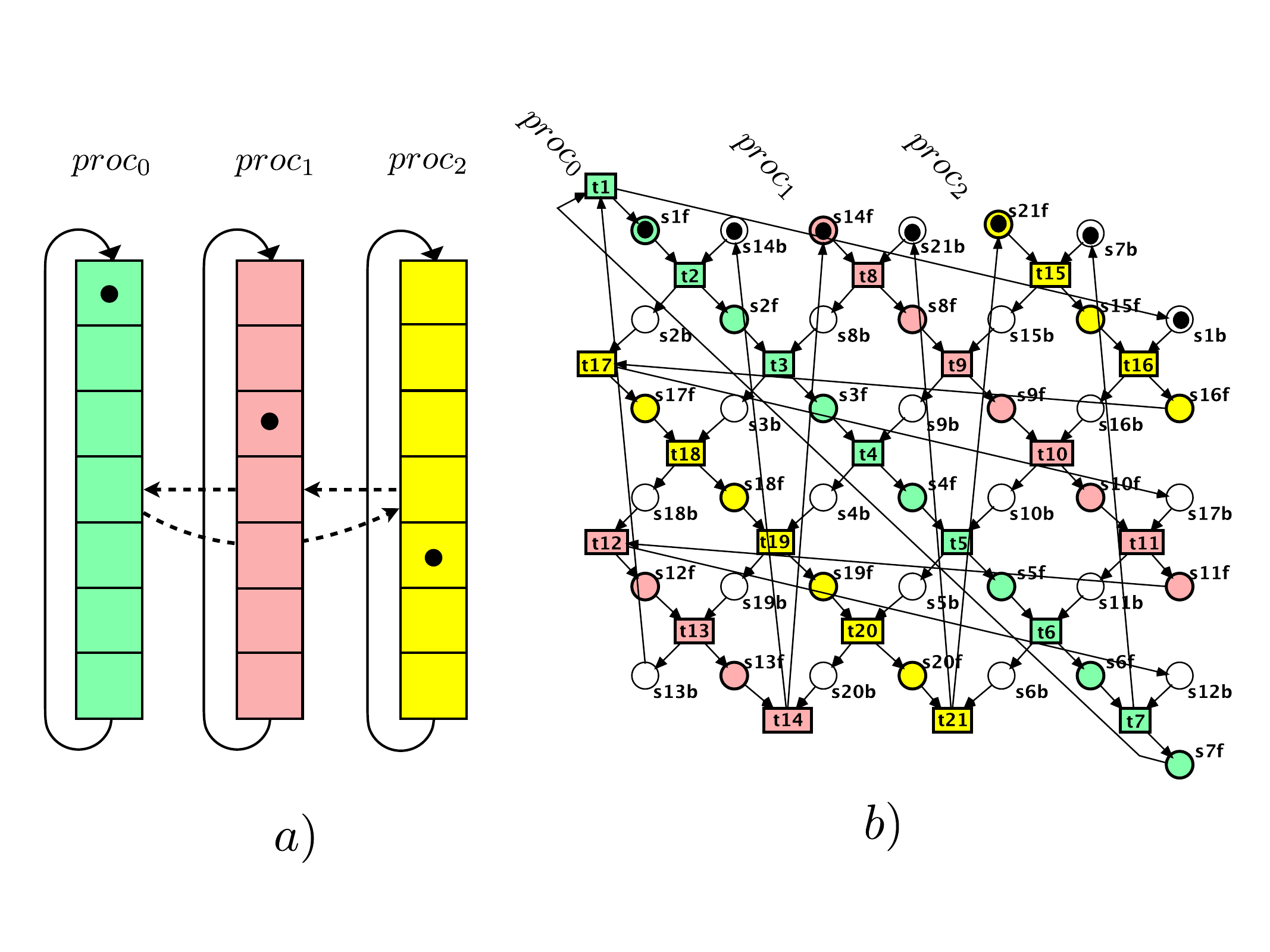}
        \caption{Three sequential processes synchronized by single-bit channels, }
        \label{c-4-3-3-3-b}
      \end{center}
\end{figure}

Cycloids are structures that are defined with methods of discrete mathematics, which makes proofs sometimes not very descriptive. It was therefore a great step forward that a method was introduced in \cite{Valk-2020} that allows proofs to be carried out with the help of linear algebra. This method is called \emph{Cycloid Algebra}. 
Three theorems are proved in this article using Cycloid Algebra, namely
a) on the equivalence of transitions with respect to the cycloid folding,
b)  on isomorphisms of cycloids and 
c) on the minimal length of cycles with respect to the grafic structure of a cyloid.


To give an application for the theory, as presented in this article,  consider a distributed system of a finite number of circular and sequential processes. The processes are synchronized by uni-directional one-bit channels in such a way that they behave like a circular traffic queue when folded together. To give an example, Figure \ref{c-4-3-3-3-b}a) shows three such sequential circular processes, each of length $7$. In the initial state the control is in position $1$, $3$ and $5$, respectively. The synchronization, realized by the connecting channels, should be such as the three processes would be folded together. This means, that the controls of $proc_0$ and $proc_1$ can make only one  step until the next process $proc_2$ makes a step itself, while
the control of $proc_2$ can make  two steps until $proc_0$ makes a step. Following  \cite{Valk-2020} this behaviour is realized by the cycloid of 
Figure \ref{c-4-3-3-3-b}b) modelling the three processes 
by the transition sequences $proc_0 =$ [\textbf{t1 t2 $\cdots$ t7}], as well as $proc_1 =$ [\textbf{t8 t9 $\cdots$ t14}] and $proc_2 =$ [ \textbf{t15 t16 $\cdots$ t21}]. The channels are represented by the safe places connecting these processes. By this example the power of the presented theory is shown, since the rather complex net is  unambiguously determined by the parameters 
 $\mathcal{C} ( \alpha, \beta, \gamma, \delta ) = \mathcal{C}(4,3,3,3)$. A next question could be, how to change the cycloid when the parameters of $\beta = 3$ processes of process length  $p  = 7$ should be changed to a different value, say the double $p  = 14$. As will be explained in a forthcoming article, the theory returns even three cycloids, namely  $ \mathcal{C}_1( 4,3,10,3 ) $, $ \mathcal{C}_2( 4,3,6,6 ) $ and $ \mathcal{C}_3( 4,3,2,9) $. However,  as follows from Theorem \ref{symmetry} these three solutions are isomorphic. The flexibilty of the model is also shown by the following additional example. By doubling in  $ \mathcal{C}( 4,3,3,3 ) $ the value of $\beta$ we obtain the cycloid $ \mathcal{C}( 4,6,3,3 ) $, which models a distributed system of three circular sequential processes, each of  length $p=10$. However, different to the examples above, each process contains 
 \textbf{two }  control tokens. 
 Translated to the distributed model, in the initial state each of the three sequential processes contains two items, particularly $proc_0$ in positions $0$ and $5$ in the circular queue of length $10$, 
  $proc_1$ in positions $1$ and $6$ and $proc_2$ in positions $3$ and $8$.
 The present article is part of a general project to investigate all such features of cycloids to make them available for  Software Engineering.


We recall some standard notations for set theoretical relations.
If $R\subseteq A \!\times\! B$ is a relation and $U \subseteq A$ then 
$R[U]:= \{b\,|\,\exists u \in U: (u,b)\in R\}$ is the \emph{image} of $U$ and $R[a]$ stands for $R[\{a\}]$. 
$R^{-1}$ is the \emph{inverse relation} and $R^+$ is the \emph{transitive closure} of $R$ if $A=B$.
Also, if $R\subseteq A \!\times\! A$ is an equivalence relation then $\eqcl[R]{a}$
 is the \emph{equivalence class} of the quotient $A/R$ containing $a$.
Furthermore  $\Nat$, $\Natp$, $\Int$ and $\Real$ denote the sets of integers,  positive integer, integer and real numbers, respectively.
For integers: $a|b$ if $a$ is a factor of $b$.
The $modulo$-function is used in the form 
$a \,mod\,b = a - b \cdot \lfloor \frac{a}{b} \rfloor$, which also holds for negative integers $a \in \Int$.
In particular, $-a\,mod\,b = b-a$ for $0<a\leq b$.


\section{Petri Space and Cycloids}  \label{sec-cycloids}

We define (Petri) nets as they will be used in this article. 

 \begin{definition}[\cite{Valk-2019}] \label{def-net} 
 As usual, a net $ \N{} = (S, T, F)$ is defined by non-empty, disjoint sets 
 $S$ of places and $T$ of transitions, connected by a flow relation 
 $F \subseteq (S \cp T) \cup (T \cp S)$ and $X := S \cup T$.
A transition $t \in T$ is \emph{active} or \emph{enabled} in a marking $M \subseteq S$ if $\; ^{\ndot} t \subseteq M \, \land \, t^{\ndot} \cap M = 
\emptyset$\footnote{With the condition $t^{\ndot} \cap M = \emptyset$ we follow Petri's definition, but with no impacts in this article.}.
In this case we obtain $M \stackrel{t}{\rightarrow}M'$ if $M' = M \backslash^{\ndot} t \cup t^{\ndot}$, where 
$^{\ndot} x := F^{-1}[x], \; x^{\ndot} := F[x] $
denotes the input and output elements of an element $x \in X$, respectively. 
 $\stackrel{*}{\rightarrow}$ is the reflexive and transitive closure of $\rightarrow$. 
\end{definition}

\begin{figure}
	\begin{center}
	\includegraphics [scale = 0.22]{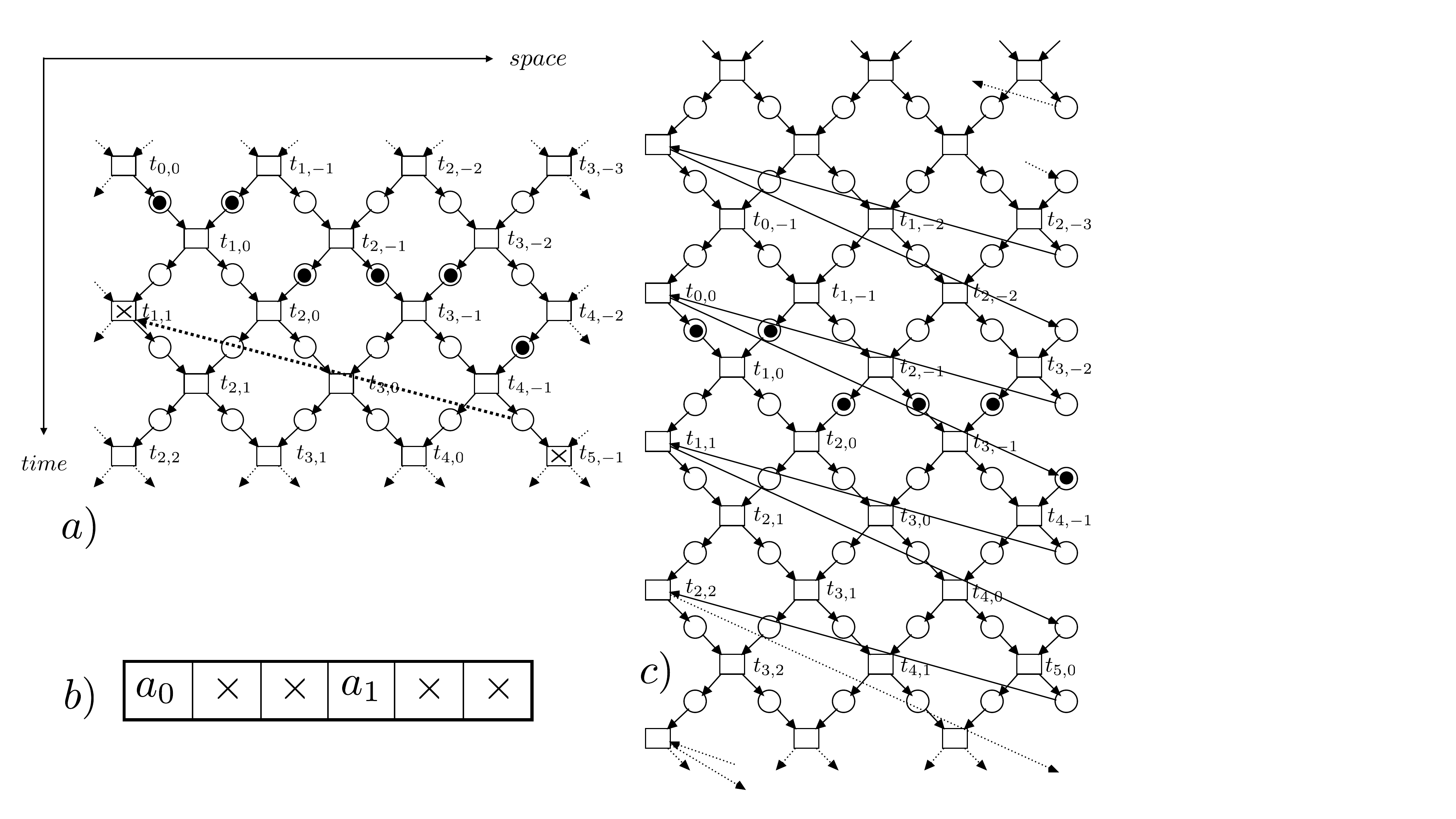}
		\caption{ a) Petri space, b) circular traffic queue and c) time orthoid.}
	\label{petrispace-key}
	 \end{center}
\end{figure}

Petri started with an event-oriented version of the Minkowski space which is called Petri space now. 
Contrary to the Minkowski space, the Petri space is independent of an embedding into $\Int \times \Int$.
It is therefore suitable for the modelling in transformed coordinates as in non-Euclidian space models.
However, the reader will wonder that we will apply linear algebra, for instance using equations of lines.
This is done only to determine the relative position of points.
It can be understood by first topologically transforming and embedding the space into $\Real \times \Real$, calculating the position and then transforming back into the Petri space.
Distances, however, are \underline{not} computed with respect to the Euclidean metric, but by counting steps in the grid of the Petri space, like Manhattan distance or taxicab geometry.

For instance, the transitions of the Petri space might model the moving of items in time and space in an unlimited way.
To be concrete, a coordination system is introduced with arbitrary origin (see Figure~\ref{petrispace-key} a).
The occurrence of transition $t_{1,0}$ in this figure, for instance, can be interpreted as a step of a traffic item (the token in the left input-place) in both space and time direction.
It is enabled by a gap or co-item (the token in the right input-place).
 Afterwads the traffic item can make a new step by the occurrence of transition $t_{2,0}$.
By the following definition the places obtain their names by their input transitions
(see Figure~\ref{P-space+FD} b).

\begin{figure}
	\begin{center}
	\includegraphics[width=0.8\textwidth]{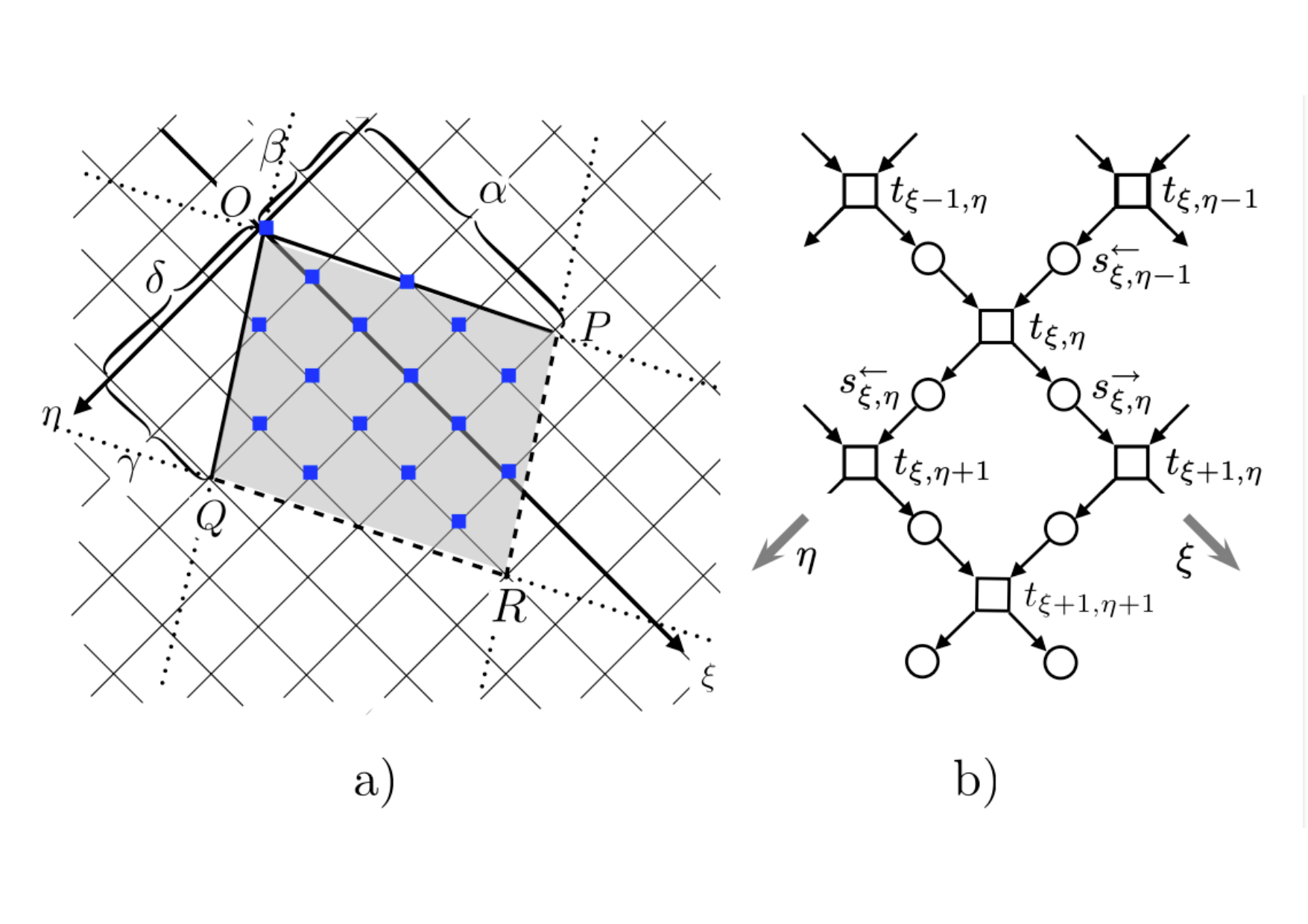}
		\caption{a)  Fundamental parallelogram of $\mathcal{C}(4,2,2,3)$ and  b) Petri space.}
		\label{P-space+FD}
	\end{center}
\end{figure}

\begin{definition} [\cite{Valk-2019}] \label{petrispace}
A $Petri \; space$ 
is defined by the net 
$\PS{1} := (S_1, T_1, F_1)$ \ where
$S_1 = \GSvw[1]\cup \GSrw[1], \;\GSvw[1]  =$ 
   $\col{\gsvw{\xi,\eta}}{\xi,\eta \in \Int},$ 
$\;\GSrw[1] = \col{\gsrw{\xi,\eta}}{\xi,\eta \in \Int }, \GSvw[1] \cap \GSrw[1] = \emptyset $, 
 $ T_1 =$ $ \col{t_{\xi,\eta}}{\xi,\eta \in \Int }, F_1 =$ $ \col{(t_{\xi,\eta},\gsvw{\xi,\eta})}{\xi,\eta \in \Int } \cup \col{(\gsvw{\xi,\eta},t_{\xi+1,\eta})}{\xi,\eta \in \Int } \cup $ 
 $\col{(t_{\xi,\eta},\gsrw{\xi,\eta})}{\xi,\eta \in \Int } \cup \col{(\gsrw{\xi,\eta},t_{\xi,\eta+1})}{\xi,\eta \in \Int }$ (cutout in Figure~\ref{P-space+FD} b).
 $\GSvw[1]$ is the set of \emph{forward places} and $\GSrw[1]$ the set of \emph{backward places}.
 $ \prenbfw{t_{\xi,\eta}}:= \gsvw{\xi-1,\eta}$ is the forward input place of $t_{\xi,\eta}$ and in the same way 
 $ \prenbbw{t_{\xi,\eta}}:= \gsrw{\xi,\eta-1}$,
 $ \postnbfw{t_{\xi,\eta}} := \gsvw{\xi,\eta}$ and
 $\postnbbw{t_{\xi,\eta}}:= \gsrw{\xi,\eta}$ (Figure~\ref{P-space+FD} b).

\end{definition}

In two steps, by a twofold folding with respect to time and space, Petri defined the cyclic structure of a cycloid.
 One of these steps is  
 a folding $f$ with respect to space with $f(i,k)=f(i+\alpha,k-\beta)$, fusing all points $(i,k)$ of the Petri space with $(i+\alpha,k-\beta)$ where $i,k \in \Int, \alpha,\beta \in \Natp$ (\cite{Petri-NTS}, page 37).
 While Petri gave a general motivation, oriented in physical spaces, we interpret the choice of $\alpha$ and $\beta$ by our model of traffic queues.

We assume that our model of a circular traffic queues has six slots containing two items $a_0$ and $a_1$ as shown in Figure~\ref{petrispace-key} b).
These are modelled in Figure~\ref{petrispace-key} a) by the tokens in the forward input places of $t_{1,0}$ and $t_{3,-1}$.
The four co-items (the empty slots in Figure \ref{petrispace-key} b) ) are represented by the tokens in the backward input places of $t_{1,0}, t_{2,0}$ and $t_{3,-1}, t_{4,-1}$.
By the occurrence of $t_{1,0}$ and $t_{2,0}$ the first item can make two steps, as well as the second item by the transitions $t_{3,-1}$ and $t_{4,-1}$, respectively.
Then $a_1$ has reached the end of the queue and has to wait until the first item is leaving its position.
Hence, we have to introduce a precedence restriction between the transitions $t_{1,0}$ and $t_{5,-1}$.
This is done by fusing the transitions 
 $t_{5,-1}$ and the left-hand follower $t_{1,1}$ of $t_{1,0}$ ,  which are marked by a cross in Figure~\ref{petrispace-key} a).  This is implemented by the dotted arc in the same figure.
To determinate $\alpha$ and $\beta$ we set 
 $(5,-1) =(1+\alpha,1-\beta)$ which gives 
 $5 = 1+ \alpha$ or $\alpha = 4$ and 
 $-1 = 1- \beta$ or $\beta = 2$.
By the equivalence relation
 $t_{\xi,\eta} \equiv t_{\xi + 4,\eta -2} $ we obtain the structure in Figure~\ref{petrispace-key} c). 
The resulting still infinite net is called a \emph{time orthoid} (\cite{Petri-NTS}, page 37), as it extends infinitely in temporal future and past.
The second step is a folding with $f(i,k)=f(i+\gamma,k+\delta)$ with $\gamma,\delta \in \Natp$ reducing the system to a cyclic structure also in time direction. 
As shown in \cite{Valk-2020} an equivalent cycloid for the traffic queue of Figure~\ref{petrispace-key} b) has the parameters $ ( \alpha, \beta, \gamma, \delta ) = (4,2,2,2)$.
To keep the example more general, 
in Figure~\ref{P-space+FD} a) the values 
 $ ( \alpha, \beta, \gamma, \delta ) = (4,2,2,3)$ are chosen.
 In this representation of a cycloid, called \emph{fundamental parallelogram}, the squares of the transitions as well as the circles of the places are omitted.
 All transitions with coordinates within the parallelogram belong to the cycloid including those on the lines between $O,Q$ and $O, P$, but excluding those of the points $Q,R,P$ and those on the dotted edges between them.
 All parallelograms of the same shape, as indicated by dotted lines outside the fundamental parallelogram are fused with it.

\begin{definition} [\cite{Valk-2019}] 
\label{cycloid}
A \emph{cycloid} is a net \ $ \zyk( \alpha, \beta, \gamma, \delta ) = (S, T, F)$, defined by parameters 
 \ $ \alpha, \beta, \gamma, \delta \in \Natp$, by a quotient \cite{SR:SemNN:87} of the Petri space \ $\PS{1} := (S_1, T_1, F_1)$ \ 
with respect to the equivalence relation
$\mo\zykaeq 
\subseteq X_1 \cp X_1 $
with $X_1 = S_1 \cup T_1$,
 $ \mo\zykaeq[\GSvw[1]] \subseteq \GSvw[1], 
 \mo\zykaeq[\GSrw[1]] \subseteq \GSrw[1],
 \mo\zykaeq[T_1] \subseteq T_1,$ 
 $ x_{\xi,\eta} \zykaeq x_{\xi+m\alpha+n\gamma,\,\eta-m\beta+n\delta} $
for all $ \xi, \eta, m, n \in \Int $\,, $ X = X_1/_\zykaeq $,
 $ \eqcl[\zykaeq]{x} \mb{F} \eqcl[\zykaeq]{y} \: \Leftrightarrow
\exists\,{x'\in\eqcl[\zykaeq]{x}}\,\exists\,y' \in \eqcl[\zykaeq]{y}: \,x' F_1 y' $
 \ for all $x, y \in X_1 $. 
The matrix $\mathbf{A} = \begin{pmatrix} \alpha & \gamma \\ -\beta & \delta \end{pmatrix} $ is called the matrix of the cycloid.
Petri denoted the number $|T|$ of transitions as the area $A$ of the cycloid and proved in \cite{Petri-NTS} its value to $|T| =A =\alpha\delta+\beta\gamma$ which equals the determinant $A = det(\mathbf{A})$.
The embedding of a cycloid in the Petri space is called \emph{fundamental parallelogram} 
(see Figure~\ref{P-space+FD} a).
\end{definition}


\section{Equivalence and Isomorphisms}  \label{sec-equi}

For proving the equivalence of two points in the Petri space the following procedure\footnote{The algorithm is implemented under \url{http://cycloids.de}.} is useful.

\begin{theorem}[\cite{Valk-2020}] \label{parameter}
Two points $\vec{x}_1, \vec{x}_2\in X_1$ are equivalent $\vec{x}_1 \equiv \vec{x}_2$ if and only if
 for the difference $\vec{v} := \vec{x_2}-\vec{x_1}$
 the parameter vector 
 $\pi(\vec{v}) = \frac{1}{A} \cdot\mathbf{B} \cdot \vec{v}$ has integer values, where $A$ is the area and 
 $\mathbf{B} = \begin{pmatrix} \delta & -\gamma \\ \beta & \alpha \end{pmatrix}$. \\
 In analogy to Definition \ref{cycloid} we obtain 
 $\vec{x}_1 \equiv \vec{x}_2 \Leftrightarrow$ 
 $\exists \;m, n \in \Int: \vec{x_2}-\vec{x_1} =\mathbf{A}\begin{pmatrix} m \\ n \end{pmatrix}$.
\end{theorem}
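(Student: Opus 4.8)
The plan is to read the equivalence relation of Definition~\ref{cycloid} directly in matrix form and then to recognize $\frac{1}{A}\mathbf{B}$ as the inverse of $\mathbf{A}$. Writing the two points with coordinates $\vec{x}_1 = (\xi_1,\eta_1)$ and $\vec{x}_2 = (\xi_2,\eta_2)$, the defining relation says that $\vec{x}_1 \equiv \vec{x}_2$ holds exactly when there exist $m,n \in \Int$ with $\xi_2 = \xi_1 + m\alpha + n\gamma$ and $\eta_2 = \eta_1 - m\beta + n\delta$. Collecting these two scalar equations into a single vector equation gives $\vec{x}_2 - \vec{x}_1 = \binom{m\alpha+n\gamma}{-m\beta+n\delta} = \mathbf{A}\binom{m}{n}$, which is precisely the second characterization in the statement. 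So the second equivalence follows immediately, with no computation beyond rewriting the defining relation.

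For the first equivalence I would compute the relevant $2\times 2$ determinant and inverse. Since $\det(\mathbf{A}) = \alpha\delta - \gamma(-\beta) = \alpha\delta + \beta\gamma = A$ and all four parameters are positive, $\mathbf{A}$ is invertible, and the standard adjugate formula yields $\mathbf{A}^{-1} = \frac{1}{A}\begin{pmatrix}\delta & -\gamma \\ \beta & \alpha\end{pmatrix} = \frac{1}{A}\mathbf{B}$. Thus the map $\pi$ appearing in the statement is nothing but multiplication by $\mathbf{A}^{-1}$, i.e. $\pi(\vec{v}) = \mathbf{A}^{-1}\vec{v}$.

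Combining the two observations finishes the argument. If $\vec{x}_1 \equiv \vec{x}_2$ then by the first paragraph $\vec{v} = \mathbf{A}\binom{m}{n}$ for some integers $m,n$, whence $\pi(\vec{v}) = \mathbf{A}^{-1}\mathbf{A}\binom{m}{n} = \binom{m}{n}$ has integer entries. Conversely, if $\pi(\vec{v})$ is integral, set $\binom{m}{n} := \pi(\vec{v}) = \mathbf{A}^{-1}\vec{v}$; multiplying by $\mathbf{A}$ gives $\vec{v} = \mathbf{A}\binom{m}{n}$ with $m,n \in \Int$, which by the first paragraph means $\vec{x}_1 \equiv \vec{x}_2$. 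There is no serious obstacle here: the only point requiring care is the invertibility of $\mathbf{A}$, guaranteed by $A = \alpha\delta + \beta\gamma > 0$ for $\alpha,\beta,\gamma,\delta \in \Natp$, and the rest is the elementary fact that $\vec{v}$ lies in the lattice $\mathbf{A}\,\Int^2$ if and only if its image under $\mathbf{A}^{-1}$ is again integral.
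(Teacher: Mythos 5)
Your proof is correct and follows essentially the same route as the paper: both read Definition~\ref{cycloid} in vector form to get $\vec{v}=\mathbf{A}\binom{m}{n}$, and both identify $\frac{1}{A}\mathbf{B}$ as $\mathbf{A}^{-1}$ via the adjugate formula, using $\det(\mathbf{A})=A=\alpha\delta+\beta\gamma>0$. The only difference is presentational — you spell out the two directions separately where the paper chains biconditionals — so there is nothing to add.
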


\begin{proof}
For  $\vec{x}_1 := (\xi_1,\eta_1),
 \vec{x}_2 := (\xi_2,\eta_2), 
\vec{v} := \vec{x}_2 - \vec{x}_1$
from Definition \ref{cycloid} we obtain \ in vector form: 
$\vec{x}_1 \equiv \vec{x}_2 \Leftrightarrow 
\exists \, m,n \in \Int: \begin{pmatrix} \xi_2 \\ \eta_2  \end{pmatrix} =  
\begin{pmatrix} \xi_1 +m\alpha +n\gamma \\ \eta_1-m\beta + n\delta  \end{pmatrix}$
$\Leftrightarrow$ \\
$\exists \, m,n \in \Int:$
$\vec{v}=\begin{pmatrix} \xi_2 -\xi_1\\ \eta_2 - \eta_1 \end{pmatrix} = \begin{pmatrix}  m\alpha+n\gamma\\ -m\beta+n\delta  \end{pmatrix} = \begin{pmatrix} \alpha & \gamma \\ -\beta & \delta \end{pmatrix} \begin{pmatrix} m \\ n  \end{pmatrix} = \mathbf{A}\begin{pmatrix} m \\ n  \end{pmatrix}$ 
$\Leftrightarrow$
$\begin{pmatrix} m \\ n  \end{pmatrix} = \mathbf{A}^{-1}\vec{v}$ 
$\in \Int \times \Int$ . 
It is well-known that $\mathbf{A}^{-1} = \frac{1}{det(\mathbf{A})}\mathbf{B}$ 
if $det(\mathbf{A}) > 0$ (see any book on linear algebra). 
The condition $det(\mathbf{A}) = A = \alpha\delta + \beta\gamma >0$ is satisfied by the definition of a cycloid.
\qed \end{proof}


Since constructions of cycloids may result in different but isomorphic forms the following theorem is important.
A method using linear algebra together with the matrices $\mathbf{A}$ in Definition \ref{cycloid} or $\mathbf{B}$ in Definition \ref{parameter} is called a \emph{Cycloid Algebra} method.
We give here a proof using this approach, which was not yet known when the article \cite{Valk-2019} had been published.

\begin{theorem}[\cite{Valk-2019}]\label{symmetry}
The following cycloids are net isomorphic (Definition \ref{def-net}) to $\mathcal{C}(\alpha,\beta,\gamma,\delta) $:\\
     $\;\;\;$ a) $\mathcal{C}(\alpha,\beta,\gamma -  \alpha,\delta+\beta) $ if  $\gamma >  \alpha$, \\
      $\;\;\;$ b) $\mathcal{C}(\alpha,\beta,\gamma +  \alpha,\delta- \beta) $ if $\delta >  \beta$.\\
       $\;\;\;$ c) $\mathcal{C}(\beta,\alpha,\delta, \gamma)$. (The $symmetric \; cycloid$ of $\mathcal{C}(\alpha,\beta,\gamma,\delta) $.)
\end{theorem}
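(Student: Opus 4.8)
The plan is to reduce everything to the lattice reformulation supplied by Theorem~\ref{parameter}: for two points of the same kind (forward place, backward place, or transition) one has $\vec{x}_1 \equiv \vec{x}_2$ exactly when $\vec{x}_2 - \vec{x}_1$ lies in the column lattice $L_{\mathbf{A}} := \mathbf{A}\,\Int^2$ spanned by the two columns of $\mathbf{A}$. The guiding observation is that right-multiplying $\mathbf{A}$ by a unimodular matrix $U \in GL_2(\Int)$ is an integer change of basis of the columns and hence leaves $L_{\mathbf{A}}$ unchanged, merely relabelling the parameters. Since all four cycloids in the statement are quotients of the \emph{same} Petri space $\PS{1}$, any two of them that induce the same lattice are literally the same quotient net, so the identity is a net isomorphism. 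This settles (a) and (b) simultaneously.

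For (a) a direct computation gives $\mathbf{A}\left(\begin{smallmatrix} 1 & -1 \\ 0 & 1 \end{smallmatrix}\right) = \left(\begin{smallmatrix} \alpha & \gamma-\alpha \\ -\beta & \delta+\beta \end{smallmatrix}\right)$, the matrix of $\mathcal{C}(\alpha,\beta,\gamma-\alpha,\delta+\beta)$, and for (b) one has $\mathbf{A}\left(\begin{smallmatrix} 1 & 1 \\ 0 & 1 \end{smallmatrix}\right) = \left(\begin{smallmatrix} \alpha & \gamma+\alpha \\ -\beta & \delta-\beta \end{smallmatrix}\right)$, the matrix of $\mathcal{C}(\alpha,\beta,\gamma+\alpha,\delta-\beta)$. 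Both shear matrices are unimodular, so the lattices, hence the equivalence relations and the quotient nets, coincide with those of $\mathcal{C}(\alpha,\beta,\gamma,\delta)$. The side conditions $\gamma>\alpha$ in (a) and $\delta>\beta$ in (b) are needed only to keep the new parameters in $\Natp$ so that the target is a legitimate cycloid; preservation of the determinant $\alpha\delta+\beta\gamma$ is then automatic, consistent with both nets having the same number of transitions.

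The symmetric case (c) is the genuinely different one, since swapping $\alpha\!\leftrightarrow\!\beta$ and $\gamma\!\leftrightarrow\!\delta$ also interchanges the roles of the space and time directions, so no pure column operation suffices. Here I would introduce the coordinate-swap map $\phi$ on $\PS{1}$ given by $\phi(t_{\xi,\eta}) = t_{\eta,\xi}$, $\phi(\gsvw{\xi,\eta}) = \gsrw{\eta,\xi}$ and $\phi(\gsrw{\xi,\eta}) = \gsvw{\eta,\xi}$, which sends places to places and transitions to transitions while exchanging forward and backward places. One verifies that $\phi$ is a net automorphism of the Petri space by running through the four arc families of $F_1$: each forward arc is carried to a backward arc with swapped indices and conversely, so $\phi(F_1) = F_1$. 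On coordinates $\phi$ acts by $J = \left(\begin{smallmatrix} 0 & 1 \\ 1 & 0 \end{smallmatrix}\right)$, hence it maps $L_{\mathbf{A}}$ onto $J\mathbf{A}\,\Int^2$. Since $J\mathbf{A} = \mathbf{A}_c\left(\begin{smallmatrix} -1 & 0 \\ 0 & 1 \end{smallmatrix}\right)$, where $\mathbf{A}_c = \left(\begin{smallmatrix} \beta & \delta \\ -\alpha & \gamma \end{smallmatrix}\right)$ is the matrix of $\mathcal{C}(\beta,\alpha,\delta,\gamma)$ and the sign factor is unimodular, this image lattice equals $L_{\mathbf{A}_c}$. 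Consequently $\phi$ respects the two equivalence relations, $x \equiv y \Leftrightarrow \phi(x) \equiv_c \phi(y)$, so it descends to a bijection of the quotients that still preserves the induced flow; this quotient map is the desired net isomorphism.

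The main obstacle I anticipate lies entirely in (c): one must set up $\phi$ with the correct forward/backward exchange and then check flow preservation arc-by-arc, and one must carefully track the sign-flip unimodular factor relating $J\mathbf{A}$ to $\mathbf{A}_c$, so that the two lattices are seen to agree rather than merely to be related by a reflection. Parts (a) and (b), by contrast, reduce to spotting the two shear matrices and verifying the positivity conditions, which is routine once the lattice reformulation of Theorem~\ref{parameter} is in hand.
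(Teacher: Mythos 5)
Your proposal is correct and takes essentially the same route as the paper: for a) and b) the paper also uses the identity map and shows via Theorem~\ref{parameter} that the equivalence lattice is unchanged, its computation $\mathbf{A}_1\cdot\overrightarrow{mn} = \mathbf{A}\cdot\left(\begin{smallmatrix} m\pm n \\ n \end{smallmatrix}\right)$ being exactly your unimodular shear factorization. For c), your swap $\phi$ agrees with the paper's map $\varphi(x_{\xi,\eta}) = x_{\eta+\beta,\,\xi-\alpha}$ up to translation by $(\beta,-\alpha)^{T} = \mathbf{A}_2\left(\begin{smallmatrix} 1 \\ 0 \end{smallmatrix}\right)$, a lattice vector of the symmetric cycloid, so both induce the same isomorphism on the quotient, and your sign-flip factor $\mathrm{diag}(-1,1)$ is precisely the paper's substitution $m' = -m$, $n' = n$ (your explicit exchange of forward and backward places is a point the paper leaves implicit).
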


\begin{proof} 
In all the three cases we give a bijection on the Petri space, which is a congruence with respect to equivalence.
Let be $ \mathcal{C} = \mathcal{C}( \alpha, \beta, \gamma, \delta ) $ with matrix
$\mathbf{A}$ (Definition \ref{cycloid}) and the vector $\overrightarrow{mn} := (m,n) \in \Int^2$.\\
a) and b): The bijection is the identity map and we prove that the equivalence relation of 
$ \mathcal{C}_1 = \mathcal{C}_1( \alpha, \beta, \gamma \pm \alpha, \delta \mp \beta) $ 
with 
  matrix 
 $\mathbf{A}_1 = \begin{pmatrix} \alpha &\; \gamma \pm \alpha \\ -\beta & \;\delta \mp \beta \end{pmatrix}$  
remains unchanged:
 $\mathbf{A}_1 \cdot \overrightarrow{mn} = 
 \mathbf{A} \cdot \overrightarrow{mn} + \begin{pmatrix} 0 & \;\; \pm \alpha \\ 0 & \;\; \mp \beta \end{pmatrix}\cdot \overrightarrow{mn} =
 \mathbf{A} \cdot \overrightarrow{mn} + \begin{pmatrix} \pm n \cdot \alpha \\ \mp n\cdot\beta \end{pmatrix} =
 \mathbf{A} \cdot \overrightarrow{mn} + \mathbf{A} \cdot \begin{pmatrix} \pm n \\ 0 \end{pmatrix} =$ 
$ \mathbf{A} \cdot \begin{pmatrix} m \pm n \\ n \end{pmatrix} $.
 Hence, the by Theorem \ref{parameter} b) the equivalence relations of $ \mathcal{C}$ and $ \mathcal{C}_1 $ are the same, since $m$ and $n$ are integers iff $m \pm n$ and $n$ are integers.\\
 c): We denote $ \mathcal{C}_2 = \mathcal{C}_2( \beta,\alpha,\delta,\gamma ) $ with matrix
  $\mathbf{A}_2 = \begin{pmatrix} \beta &\; \delta\\ -\alpha & \gamma \end{pmatrix}$.
  Using the sets $X$ and $X_2$ of $ \mathcal{C}$ and $ \mathcal{C}_2$, respectively (Definition \ref{def-net}),
  the isomorphism is defined by $\varphi(x_{\xi,\eta}) := x_{\eta+\beta,\xi-\alpha}$. 
  Obviously, $\varphi$ is injective and surjective. \\
  In the following we use the indices as coordinates of thge points in the Petri space and write \\
  $\varphi \begin{pmatrix} \xi\\ \eta \end{pmatrix} =  \begin{pmatrix} \eta+\beta\\ \xi-\alpha \end{pmatrix} $. It remains to prove that $\varphi$ is a congruence, i.e. 
  $$\begin{pmatrix} \xi\\ \eta \end{pmatrix} \equiv \begin{pmatrix} \xi_1\\ \eta_1 \end{pmatrix} \; \Rightarrow \; \varphi\begin{pmatrix} \xi\\ \eta \end{pmatrix} \equiv \varphi\begin{pmatrix} \xi_1\\ \eta_1 \end{pmatrix} $$
 For the precondition of this implication we have  by Theorem \ref{parameter} b)
 $\begin{pmatrix} \xi\\ \eta \end{pmatrix} \equiv \begin{pmatrix} \xi_1\\ \eta_1 \end{pmatrix} \; \Leftrightarrow \;  \begin{pmatrix} \xi-\xi_1 \\ \eta - \eta_1 \end{pmatrix} = 
 \mathbf{A}\begin{pmatrix} m\\ n  \end{pmatrix} = \begin{pmatrix} \alpha \cdot m + \gamma \cdot n \\ -\beta \cdot m + \delta \cdot n  \end{pmatrix} $  for some $m,n \in \Int$.
 We use this term to prove the conclusio:
 $\varphi\begin{pmatrix} \xi \\ \eta  \end{pmatrix} \equiv \varphi\begin{pmatrix} \xi_1\\ \eta_1  \end{pmatrix}  
 \; \Leftrightarrow \; 
 \begin{pmatrix} \eta+\beta \\ \xi-\alpha  \end{pmatrix}  - \begin{pmatrix} \eta_1+\beta \\ \xi_1-\alpha 
 \end{pmatrix} =
 \begin{pmatrix} \eta - \eta_1 \\ \xi - \xi_1  \end{pmatrix} =
 $
 $\mathbf{A}_2 \begin{pmatrix} m' \\ n' \end{pmatrix}  = 
 \begin{pmatrix} \beta \cdot m' + \delta \cdot n'\\ -\alpha \cdot m' + \gamma \cdot n'  \end{pmatrix} $ 
 for some $m',n' \in \Int$. Using the precondition the conclusio holds by setting 
 $m' := -m$ and $n' := n$.
\end{proof}        
 
In plane geometry, a shear mapping is a linear map that displaces each point in a fixed direction, by an amount proportional to its signed distance from the line that is parallel to that direction and goes through the origin\footnote{\url{https://en.wikipedia.org/wiki/Shear_mapping}}. 
For a cycloid $ \mathcal{C}( \alpha, \beta, \gamma, \delta ) $ the corners of its fundamental parallelogram have the coordinates 
$O = \begin{pmatrix} 0 \\ 0 \end{pmatrix}, 
P = \begin{pmatrix} \alpha \\ -\beta \end{pmatrix},
R = \begin{pmatrix} \alpha+\gamma \\ \delta-\beta \end{pmatrix}$ and $
Q = \begin{pmatrix} \gamma \\ \delta \end{pmatrix} $.
 Comparing them with the corners $O',P',R',Q'$ of the transformed cycloid
$\zyk(\alpha,\beta,\gamma + \alpha,\delta- \beta) $ of Theorem \ref{symmetry} b) we observe $O' = O, P' = P, Q' = \begin{pmatrix} \gamma + \alpha\\ \delta -\beta\end{pmatrix}=R$ and the lines $\overline{Q,R}$ and 
$\overline{Q',R'}$ are the same. Therefore the second is a shearing of the first one.
This is shown in Figure\footnote{The figure has been designed using the tool \url{http://cycloids.adventas.de}.}~\ref{2328-2362} for the cycloids $ \mathcal{C}( 2,3,2,8), \mathcal{C}( 2,3,4,5) $ and $ \mathcal{C}( 2,3,6,2) $. 
\begin{figure}[htbp]
	\begin{center}
		\includegraphics [scale = 0.31]{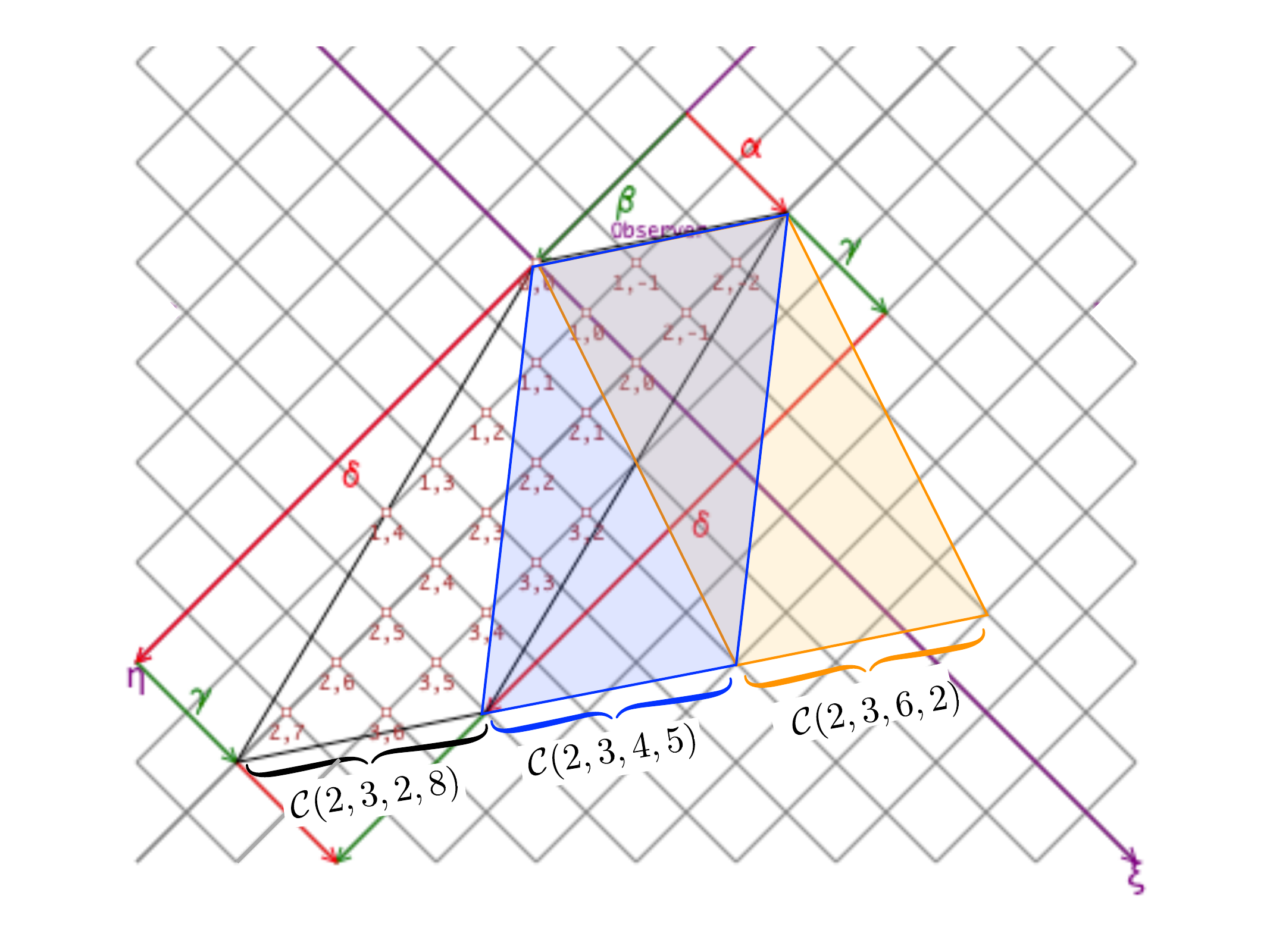}
		\caption{A shearing from  $ \mathcal{C}( 2,3,2,8)   $ to $ \mathcal{C}( 2,3,6,2) $.}
		\label{2328-2362}
	\end{center}
\end{figure}
When applying the equivalences of Theorem \ref{symmetry} the parameters $\gamma$ and $\delta$ 
are changed which leads to the following definition of $\gamma\delta$-reduction equivalence.

\begin{lemma} [\cite{Valk-2019}] \label{normal form}
For any cycloid $\zyk(\alpha,\beta,\gamma,\delta) $ there is a minimal cycle containing the origin $O$ in its fundamental parallelogram representation.
\end{lemma}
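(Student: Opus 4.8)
The plan is to lift cycles of the cycloid to paths in the Petri space $\PS{1}$ and thereby reduce the claim to the existence of a shortest non-zero vector in the lattice $\mathbf{A}\,\Int^2$ spanned by the columns of the cycloid matrix. Since the folding of Definition~\ref{cycloid} identifies $x$ with $x+\mathbf{A}\binom{m}{n}$ for all $(m,n)\in\Int^2$, it is a free action by translations with unique path-lifting; hence any closed walk through $O$ lifts to a path in $\PS{1}$ starting at $O=(0,0)$ and ending at a point $\vec{v}\equiv O$, and conversely every such path projects to a cycle through $O$. By Theorem~\ref{parameter} these endpoints are exactly the lattice points $\vec{v}=\mathbf{A}\binom{m}{n}$ with $(m,n)\neq(0,0)$.

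First I would pin down the length. Every elementary flow-edge of $\PS{1}$ advances one unit in the $\xi$- or $\eta$-direction through exactly one place, so a lifted path reaching $(a,b)$ has graph length $2(|a|+|b|)$ and its projected cycle has the same length. Taking $(m,n)=(0,1)$ yields $\vec{v}=(\gamma,\delta)=Q$ with $\gamma,\delta\in\Natp$, whose forward path of length $2(\gamma+\delta)$ projects to a genuine cycle through $O$; thus the set of cycle lengths through $O$ is a non-empty subset of $\Natp$.

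By the well-ordering of $\Natp$ this set has a least element $L$, and any cycle realising it is a minimal cycle containing $O$. Equivalently $L=2\cdot\min\{\,|a|+|b| : (a,b)^{T}=\mathbf{A}(m,n)^{T},\ (m,n)\neq(0,0)\,\}$, the minimum being attained because the full-rank lattice $\mathbf{A}\,\Int^2$ is discrete and meets any bounded taxicab-ball in finitely many points. To finish I would transport this cycle into the fundamental parallelogram by replacing each transition of the lifted path by its unique representative inside the parallelogram, the boundary fusions of Definition~\ref{cycloid} reconnecting the path into a closed cycle through $O$.

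I expect the representation step to be the only genuine obstacle: existence of the minimum is immediate from well-ordering, but because the lifted shortest path may leave the parallelogram one must check that projecting its nodes to their canonical representatives still produces a consistent chain of flow-edges, using exactly the edge-identifications of Definition~\ref{cycloid}; the remaining care is the routine bookkeeping that turns a grid path of $|a|+|b|$ unit steps into a cycle of length $2(|a|+|b|)$.
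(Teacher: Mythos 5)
The paper itself gives no proof of this lemma (it is imported from \cite{Valk-2019}), so your attempt can only be judged against the role the lemma plays, namely in the proof of Theorem~\ref{minimal cycles}~a): it must justify that the \emph{global} minimal cycle length of the cycloid can be computed by looking only at cycles through $O$. Read this way, the lemma asserts that some cycle of globally minimal length passes through $O$, and your argument does not prove this. What you establish by well-ordering is only that among the cycles containing $O$ there is a shortest one --- a triviality that holds for every node of every net that lies on at least one cycle, and that would remain true even if every cycle through $O$ were much longer than the true minimum. The missing idea is transitivity of the cycloid under translations: for a transition $t$ with representative $\vec{p}$ in the Petri space, the map $x_{\xi,\eta}\mapsto x_{(\xi,\eta)-\vec{p}}$ preserves the flow relation and the place sorts of $\PS{1}$ and respects the equivalence of Definition~\ref{cycloid}, because that equivalence depends only on coordinate differences (Theorem~\ref{parameter}); it therefore induces a net automorphism of the cycloid carrying $t$ to the transition at the origin. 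Applying this automorphism to a globally minimal cycle, wherever it lies, produces a cycle of equal length through $O$. Only with this step is the lemma, in the form the paper uses it, actually proved; your covering/lifting machinery then describes what the cycles through $O$ look like, but it is not where the substance of the lemma lies.

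A second, independent error: your ``equivalent'' formula $L=2\cdot\min\{|a|+|b| \,:\, (a,b)^{T}=\mathbf{A}(m,n)^{T},\ (m,n)\neq(0,0)\}$ is false, because cycles of a net are directed. A directed path of $\PS{1}$ only increases coordinates, so the endpoint of a lifted cycle must satisfy $a\geq 0$ and $b\geq 0$ --- exactly the constraints $u\geq 0$, $v\geq 0$, $j\in\Nat$ appearing in Theorem~\ref{minimal cycles}~a). Dropping them and taking absolute values admits spurious lattice vectors: for $\zyk(1,1,4,1)$ one has $\mathbf{A}\cdot(1,0)^{T}=(1,-1)^{T}$ with taxicab norm $2$, while the paper computes the minimal cycle length of this cycloid to be $5$. (Minor, by contrast: you count both places and transitions, giving lengths $2(a+b)$, while the paper's $cyc=u+v$ counts transitions only.)
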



\section{The Minimal Length of a Cycle}\label{sec-mincycel}
For the next Theorem from \cite{Valk-2019}, we give a proof which follows the same concept, but is general and more formal. The version from  \cite{Valk-2019} does not cover all cases, but the special case for $\gamma = \delta$ is still valid (see case c) of the following theorem). This subcase was important for the applications to regular cycloids in \cite{Valk-2020}. Part a) of the theorem applies the Cycloid Algebra (Theorem \ref{parameter}). The minimization over two parameters $i$ and $j$ is reduced to one parameter by showing a dependance of to $i$ from $j$ in part b). By restricting to particular cases in the remaining cases no minimum operator is needed.


\begin{theorem}\label{minimal cycles}
The minimal length of a  cycle of a cycloid $\zyk = \zyk(\alpha,\beta,\gamma,\delta)$ is 
$cyc(\alpha,\beta,\gamma,\delta) =  cyc$, where
\begin{itemize}
       \item [a)] 
                      $ cyc = min \{u+v \;|\begin{pmatrix} u \\ v  \end{pmatrix} = \mathbf{A} \cdot 
                      \begin{pmatrix} i \\ j  \end{pmatrix}, \; i \in \Int, \;j  \in \Nat , \; u  \geq 0, \;	v  \geq 0\} $
       \item [b)] $ cyc = min \{j\cdot (\gamma+ \delta)+i\cdot(\alpha-\beta) \;|\;j  \in \Nat, \: i = \left\{
	                     \begin{array}{lll}
		                                    \lfloor\frac{j \cdot \delta}{\beta}\rfloor  &  \textrm{if} &\alpha \leq \beta  \\
		                                   -\lfloor\frac{j \cdot\gamma}{\alpha}\rfloor  & \textrm{if} & \alpha > \beta   
	                        \end{array}
                     \right\} 
                     \} $ \\
                     The value of $j$ is bounded: $j  \leq \frac{A}{\gamma} $ if $\alpha  \leq \beta$ and $j  \leq \frac{A}{\delta} $ otherwise.
       \item [c)]  
                     $cyc = \gamma + \delta + 
                     \; \left\{
	                     \begin{array}{lll}
		            \lfloor\frac{\delta}{\beta}\rfloor (\alpha - \beta) &  \textrm{if} &\alpha \leq \beta  \;\;\textrm{and}  \;\; \gamma  \geq \delta\\
		            -\lfloor\frac{\gamma}{\alpha}\rfloor (\alpha - \beta) & \textrm{if} & \alpha > \beta   \;\;\textrm{and} \;\; \gamma   \leq \delta
	                \end{array}
                     \right\} $ 
         \item [d)] $cyc = \gamma + \frac{\delta}{\beta}\cdot \alpha = \frac{A}{\beta} \;\;\textrm{if}  \;\;  \alpha  \leq \beta 
         \;\;\textrm{and} \;\; \zyk \;\; \textrm{is regular} \;\; (\textrm{i.e.}\;\; \beta|\delta )$
         \item [e)]   $cyc = \delta + \frac{\gamma}{\alpha}\cdot \beta = \frac{A}{\alpha} \;\;\textrm{if}  \;\;  \alpha  > \beta 
         \;\;\textrm{and} \;\; \zyk \;\; \textrm{is co-regular} \;\; (\textrm{i.e.}\;\; \alpha|\gamma )$          
         \end{itemize}           
\end{theorem}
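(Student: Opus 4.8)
The plan is to reduce the combinatorial question about cycles to the lattice optimization of part a), and then to solve that optimization by elementary monotonicity arguments, specializing it step by step in b)--e).

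\textbf{Part a).} First I would fix the meaning of a cycle. Since every arc of the Petri space either raises the space coordinate $\xi$ (a forward place, $t_{\xi,\eta}\to\cdot\to t_{\xi+1,\eta}$) or the time coordinate $\eta$ (a backward place, $t_{\xi,\eta}\to\cdot\to t_{\xi,\eta+1}$), any directed closed walk of the cycloid lifts to a monotone lattice path in $\PS{1}$. By Lemma \ref{normal form} I may assume the minimal cycle passes through the origin $O$, so it lifts to a path from $(0,0)$ to a point $(u,v)$ equivalent to $O$, and its length equals the number $u+v$ of transitions it traverses ($u$ forward and $v$ backward steps). By Theorem \ref{parameter} equivalence with $O$ means $(u,v)^{\!\top}=\mathbf{A}\,(i,j)^{\!\top}$ for integers $i,j$, while monotonicity forces $u,v\ge 0$. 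A short feasibility check shows that $u,v\ge 0$ with $(u,v)\ne(0,0)$ already forces $j\ge 1$: if $j<0$ the constraints $\alpha i+\gamma j\ge 0$ and $-\beta i+\delta j\ge 0$ demand $i>0$ and $i<0$ simultaneously. Hence writing $j\in\Nat$ loses nothing, and we obtain exactly the minimization in a).

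\textbf{Part b).} For fixed $j$ I would minimize the objective $u+v=(\alpha-\beta)\,i+(\gamma+\delta)\,j$, which is affine in $i$ with slope $\alpha-\beta$, over the feasible interval $-\frac{\gamma j}{\alpha}\le i\le\frac{\delta j}{\beta}$ coming from $u\ge 0$ and $v\ge 0$. If $\alpha\le\beta$ the slope is non-positive, so the optimum is the largest admissible integer $i=\lfloor\frac{\delta j}{\beta}\rfloor$; if $\alpha>\beta$ it is the smallest, $i=-\lfloor\frac{\gamma j}{\alpha}\rfloor$. Substituting back gives the one-parameter form of b). For the bound on $j$ I would use the closed form $u+v=\frac{A}{\beta}\,j+(\beta-\alpha)\,r_j$ with $r_j=\{\frac{\delta j}{\beta}\}\in[0,1)$ (and symmetrically $\frac{A}{\alpha}\,j+(\alpha-\beta)\,s_j$ for $\alpha>\beta$): the linear term grows while the bounded correction $r_j$ is periodic with period dividing $\beta$, so the minimizer lies within the first period, which is contained in $1\le j\le\frac{A}{\gamma}$ since $\frac{A}{\gamma}=\frac{\alpha\delta}{\gamma}+\beta\ge\beta$ (and dually $j\le\frac{A}{\delta}$).

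\textbf{Parts c)--e).} These specialize the one-parameter problem to $j=1$. For c), using $u+v=\frac{A}{\beta}j+(\beta-\alpha)r_j$, every $j\ge 2$ satisfies $u+v\ge\frac{2A}{\beta}$, so it suffices that $\frac{A}{\beta}\ge(\beta-\alpha)r_1$ to place the minimum at $j=1$; I expect the hypothesis $\gamma\ge\delta$ (with $\alpha\le\beta$) to be exactly the inequality that certifies this, the case $\alpha>\beta,\ \gamma\le\delta$ being symmetric. Evaluating at $j=1$ then yields $\gamma+\delta+\lfloor\frac{\delta}{\beta}\rfloor(\alpha-\beta)$. Parts d) and e) are the corollaries where divisibility removes the correction entirely: if $\beta\mid\delta$ then $r_j\equiv 0$, so $u+v=\frac{A}{\beta}j$ is trivially minimized at $j=1$, giving $cyc=\frac{A}{\beta}=\gamma+\frac{\delta}{\beta}\alpha$; dually $\alpha\mid\gamma$ gives $cyc=\frac{A}{\alpha}=\delta+\frac{\gamma}{\alpha}\beta$. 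No comparison of $\gamma$ and $\delta$ is needed here because the correction vanishes identically. The delicate step is a): pinning down that the graph-theoretic length really equals $u+v$, that directedness yields $u,v\ge 0$, and that the reduction to the origin via Lemma \ref{normal form} is legitimate. Once a) is in place, b)--e) are routine discrete optimization, the only genuinely computational points being the $j$-bound in b) and the single inequality $\frac{A}{\beta}\ge(\beta-\alpha)r_1$ that drives c).
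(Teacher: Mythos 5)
Your proposal is correct and follows the paper's overall skeleton -- Lemma \ref{normal form} plus Theorem \ref{parameter} to obtain the lattice optimization of a), a one-parameter reduction in b), then specialization in c)--e) -- but your arguments for b) and c) are genuinely different from the paper's, and in places cleaner. For b) the paper argues geometrically: it shows that adding $(\alpha,-\beta)$ to a point with $\eta-\beta\ge 0$ cannot increase the taxicab distance from $O$ when $\alpha\le\beta$, concludes that the maximal feasible $i=\lfloor j\delta/\beta\rfloor$ is optimal, and then treats $\alpha>\beta$ by passing to the symmetric cycloid $\zyk(\beta,\alpha,\delta,\gamma)$ via Theorem \ref{symmetry} c). Your linear-optimization view (objective affine in $i$ with slope $\alpha-\beta$ over the interval $-\gamma j/\alpha\le i\le \delta j/\beta$) handles both signs uniformly and needs no isomorphism. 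For the bound on $j$ the paper invokes the graph-theoretic fact that a minimal cycle has length at most $A=|T|$ and extracts $j\gamma\le A$ from it; your periodicity argument for the correction $r_j=\{j\delta/\beta\}$ is purely algebraic, self-contained, and actually yields the stronger bound $j\le\beta$ (resp.\ $j\le\alpha$). For c) the paper compares $cyc_j-cyc_1$ term by term and gets $cyc_j-cyc_1\ge (j-1)\gamma-\delta\ge 0$ for $j\ge 2$; you instead use the closed form $\frac{A}{\beta}j+(\beta-\alpha)r_j$ and the crude bound $\frac{A}{\beta}j\ge\frac{2A}{\beta}$ for $j\ge 2$, reducing everything to the single inequality $(\beta-\alpha)r_1\le\frac{A}{\beta}$. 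The only step you left as an expectation is that $\gamma\ge\delta$ certifies this inequality; it does, and you should write it out: $(\beta-\alpha)r_1\le\beta r_1=\delta\bmod\beta\le\delta\le\gamma\le\gamma+\frac{\alpha\delta}{\beta}=\frac{A}{\beta}$. Two minor polish points: in a) you rule out only $j<0$, so add the trivial case $j=0$ (it forces $i=0$, hence the excluded zero vector), and note that your self-contained bound and case-uniform treatment spare you both the appeal to Theorem \ref{symmetry} c) and the unproved premise $cyc\le A$ that the paper relies on, which is what your route buys over the paper's more geometric one.
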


\begin{proof}
a) 
With respect to paths and cycles in the fundamental parallelogram and by Lemma \ref{normal form} it is sufficient to consider paths  starting in the origin $O$.
Such a cycle of the cycloid corresponds to a path with positive length from $O$ to an equivalent point $\vec{x}$ in the Petri space.
From Theorem \ref{parameter}  we obtain with $\vec{x_2} = \vec{x}$ and $\vec{x_1} = (0,0) $  the necessary and suffient condition 
$\exists \;i,j \in \Int : \vec{x} =  \mathbf{A} \cdot \begin{pmatrix} i \\ j  \end{pmatrix}$ with $\lnot (i = 0 \land j=0)$. If $\vec{x}= \begin{pmatrix} u \\ v  \end{pmatrix} $  
then $u+v > 0$ is the length of the path from the origin $O$ to the endpoint of $\vec{x}$. $u+v$ should be a minimum to obtain $cyc$. However, some choices of $\vec{x}$ can be excluded. There is no path from $O$ to 
$(u,v)$ if $u < 0$ or $v < 0$. Therefore $j \leq 0$ can be excluded in $\begin{pmatrix} u \\ v  \end{pmatrix} = \mathbf{A} \cdot 
                      \begin{pmatrix} i \\ j  \end{pmatrix} = \begin{pmatrix} \alpha & \gamma \\ -\beta & \delta \end{pmatrix}\cdot\begin{pmatrix} i \\ j \end{pmatrix} =
                      \begin{pmatrix} i\cdot \alpha + j \cdot \gamma \\ -i \cdot \beta+ j \cdot \delta  \end{pmatrix} $. 
This is true by the following proof by contradiction:  assume $j   \leq 0$. \\
Case 1: If $i \geq 0$ then $ v = -i\cdot\beta+j\cdot\delta < 0$ in contradiction to the condition $v \geq 0$.\\
Case 2: If $i < 0$ then $ u = i\cdot\alpha+j\cdot\gamma < 0$ in contradiction to the condition $u \geq 0$.\\

b)
We first consider the case $\alpha \leq \beta$ and prove $ \lfloor\frac{j \cdot \delta}{\beta}\rfloor $ if $\gamma  \geq \delta$. 
Denote the cutpoint of the line $\overline{QR}$ with the $\xi$-axis by $A_1$ (the line cannot be in parallel to the $\xi$-axis). Next, in a similar way, for $j  \geq 1 $ the endpoint of the vector $j\cdot (\gamma,\delta)$ is denoted by $Q_j$, including $Q_1 = Q$. (See Figure \ref{cyc-theorem} for the cases $j \in \{ 1,2,3 \}$.)
Furthermore we name the cutpoint of the line through $Q_j$ and  the endpoint of  $j\cdot (\gamma,\delta)+ (\alpha, -\beta)$ with the $\xi$-axis by $A_j$.
On this line the points $\vec{x} = j \cdot \begin{pmatrix} \gamma \\\delta \end{pmatrix} + i \cdot \begin{pmatrix} \alpha \\ - \beta \end{pmatrix} = \begin{pmatrix} u \\ v \end{pmatrix} $  are situated which define the value of $cyc = u+v$. By the condition  $v  \geq 0$ we obtain $j \cdot \delta - i \cdot \beta  \geq 0$ which is
\begin{equation}\label{GL1}
i  \leq \frac{j \cdot \delta}{\beta} 
 \end{equation}
Next we derive an expression for $i$ in dependance of $j$ by proving
that increasing the value of $i$ does not increase the distance to the origin (while the condition
$\eta \geq 0$ is not violated when going $\beta$ steps in direction $-\eta$).
\begin{figure}[htbp]
	\begin{center}
		\includegraphics [scale = 0.31]{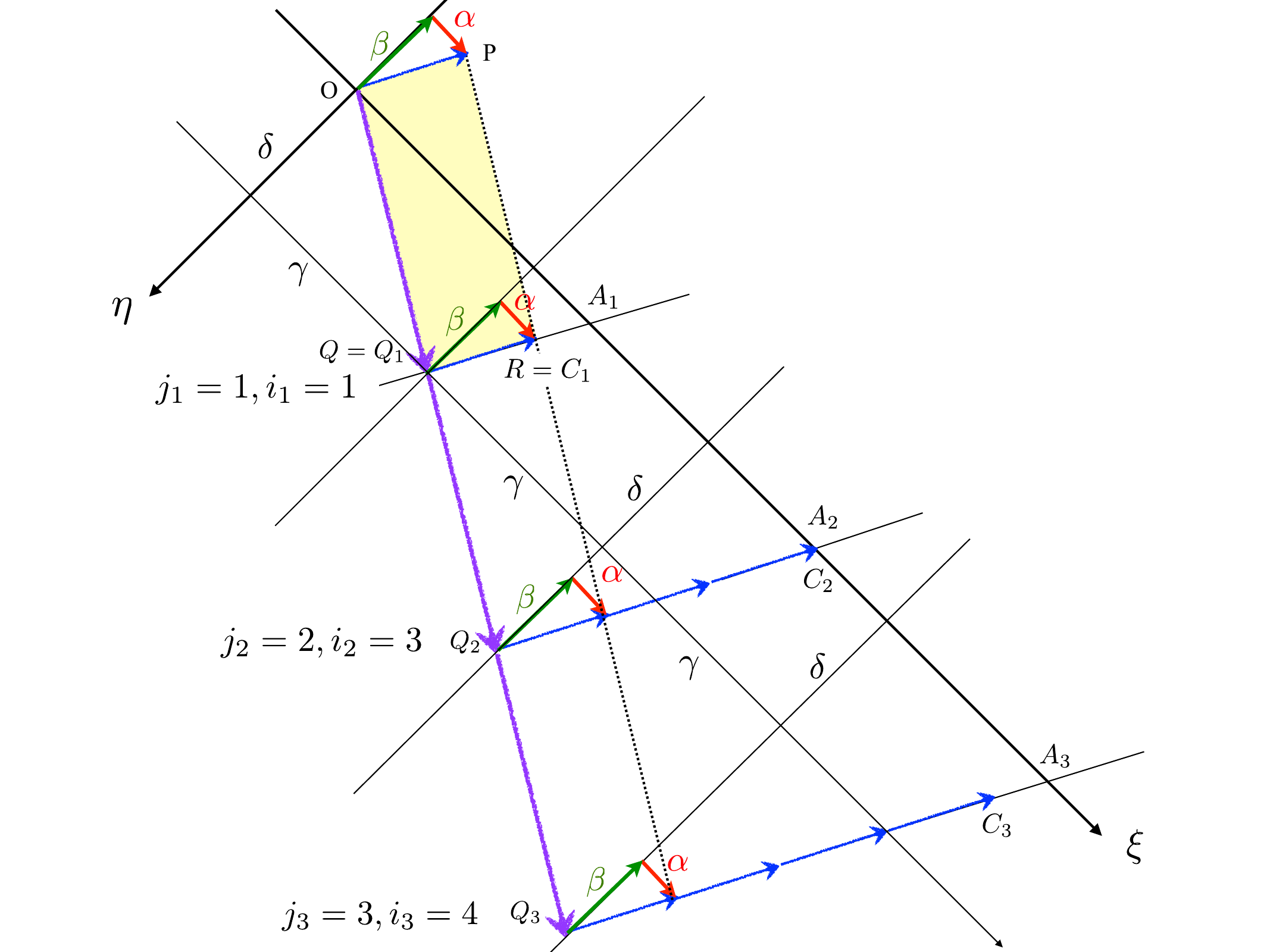}
		\caption{Referenced in the proof of Theorem \ref{minimal cycles}.}
		\label{cyc-theorem}
	\end{center}
\end{figure}
More precisely, 
for any $\xi \geq 0, \eta \geq 0$ we have to prove 
$d(O,\begin{pmatrix} \xi \\ \eta \end{pmatrix}) \geq d(O,\begin{pmatrix} \xi \\ \eta \end{pmatrix} + \begin{pmatrix} \alpha \\ -\beta \end{pmatrix} ) $ under the condition $\eta - \beta \geq 0$. 
This follows from
$\alpha \leq \beta$ by $0 \geq \alpha - \beta 
\; \Rightarrow \; \xi + \eta \geq \xi +\alpha + \eta- \beta 
\; \Rightarrow \; |\xi + \eta | \geq | \xi +\alpha | + |\eta- \beta |
\; \Rightarrow \; d(O,\begin{pmatrix} \xi \\ \eta \end{pmatrix}) \geq d(O, \begin{pmatrix}\xi + \alpha \\ \eta-\beta \end{pmatrix} )$.
By choosing the maximal value of $i$ under the inequality (\ref{GL1})  we obtain $ i = \lfloor \frac{j \cdot \delta}{\beta}\rfloor$.
Therefore the candidates to compute $cyc$ are the endpoints of the vectors 
\begin{equation}\label{vec-x-j-a}
\vec{x}{_j} = j \cdot \begin{pmatrix} \gamma \\\delta \end{pmatrix} + \lfloor \frac{j \cdot \delta}{\beta}\rfloor \cdot \begin{pmatrix} \alpha \\ - \beta \end{pmatrix}  \;\;\;\;\;\; (j \in \Nat). 
 \end{equation} 
 The distance from the origin $O$ to $\vec{x}{_j} $ is $d(O,\vec{x}{_j}) = $
 $j\cdot (\gamma+ \delta)+\lfloor\frac{j \cdot \delta}{\beta}\rfloor \cdot(\alpha-\beta) $. 
For the alternative case $\alpha > \beta$  we look at the symmetric cycloid $\zyk(\beta,\alpha,\delta,\gamma) $ (by interchanging $\alpha$ and $\beta$, as well as $\gamma$ and $\delta$), which is net isomorphic (Theorem \ref{symmetry} c) and therefore has a minimal cycle of the same length. Equation (\ref{vec-x-j-a})  is replaced by Equation (\ref{vec-x-j-b} ):
\begin{equation}\label{vec-x-j-b}
\vec{x}{_j} = j \cdot \begin{pmatrix} \delta \\\gamma \end{pmatrix} + \lfloor \frac{j \cdot \gamma}{\alpha}\rfloor \cdot \begin{pmatrix} \beta \\ - \alpha \end{pmatrix}  \;\;\;\;\;\; (j \in \Nat). 
 \end{equation} 
and we obtain
 $d(O,\vec{x}{_j})   = j \cdot (\gamma + \delta) + \lfloor\frac{j\cdot\gamma}{\alpha}\rfloor \cdot (\beta -\alpha)$ in  case of $\alpha > \beta$. \\
 
 To derive the bound we start with the observation that the length of a cycle is bounded by the number $A$ of transitions. In the case $\alpha  \leq \beta$ it follows with respect to the minimal value of $j$:

 $ cyc = j\cdot (\gamma+ \delta)+\lfloor\frac{j \cdot \delta}{\beta}\rfloor \cdot(\alpha-\beta)   \leq A$ which transforms to 
 $  j\cdot \delta - \lfloor\frac{j \cdot \delta}{\beta}\rfloor \cdot\beta + j \cdot \gamma +  
 \lfloor\frac{j \cdot \delta}{\beta}\rfloor \cdot \alpha  = 
 (j\cdot \delta)\; mod \;\beta + j \cdot \gamma +  
 \lfloor\frac{j \cdot \delta}{\beta}\rfloor \cdot \alpha  
 \leq A$. Since $(j\cdot \delta)\; mod \;\beta +\lfloor\frac{j \cdot \delta}{\beta}\rfloor \cdot \alpha \ \geq 0$ we obtain $j \cdot \gamma  \leq A$ and $j  \leq \frac{A}{\gamma} $. The result for the case $\alpha   \geq \beta$ is proved in a similar way.

 %
 c) We prove $j=1$ in case b) of the theorem under the the additional condition $\gamma  \geq \delta$.
From $\vec{x}{_j}  \geq (0,0)$ we deduce from Equation (\ref{vec-x-j-a}):
\begin{equation}\label{ineq-vec-x-j}
j \cdot \delta - \lfloor \frac{j \cdot \delta}{\beta}\rfloor \cdot \beta  \geq 0
 \end{equation} 
The endpoints of the vectors $\vec{x}{_j}$ are denoted by $C_j$ in Figure \ref{cyc-theorem}. The path from the origin $O$ to  $\vec{x}{_j}$ has the length
$cyc_j := j \cdot (\gamma+\delta) + \lfloor \frac{j \cdot \delta}{\beta}\rfloor \cdot (\alpha-\beta)$ and we next prove $cyc_j  \geq cyc_1$ which shows that $j=1$ is the optimal solution for $cyc$. This is done by the inequality \\
$cyc_j  - cyc_1 = 
j \cdot (\gamma+\delta) + \lfloor \frac{j \cdot \delta}{\beta}\rfloor \cdot (\alpha-\beta) - (\gamma+\delta+ \lfloor \frac{ \delta}{\beta}\rfloor \cdot (\alpha-\beta)) = $ \\
 $(j-1)\cdot \gamma + (j\cdot\delta-\lfloor \frac{j \cdot \delta}{\beta}\rfloor \cdot \beta) - \delta + (\lfloor \frac{j \cdot \delta}{\beta}\rfloor \cdot \alpha - \lfloor \frac{\delta}{\beta}\rfloor \cdot \alpha) +  \lfloor \frac{\delta}{\beta}\rfloor \cdot \beta  \geq \\
 (j-1)\cdot \gamma  - \delta  \geq \gamma - \delta  \geq 0$. \\
 The second summand in the second line is not negative due to the inequality (\ref{ineq-vec-x-j}). This holds obviously for the fourth summand 
 $\lfloor \frac{j \cdot \delta}{\beta}\rfloor \cdot \alpha - \lfloor \frac{\delta}{\beta}\rfloor   \geq 0$. In the last inequalities $j > 0$ and $\gamma  \geq \delta$ is used.
 It remains to prove that negative values of $i$ are not needed to compute $cyc$ in the cases under review. In the same way as before the inequality
 $d(O,\begin{pmatrix} \xi \\ \eta \end{pmatrix})  \leq d(O, \begin{pmatrix}\xi \\ \eta\end{pmatrix} -\begin{pmatrix}\alpha \\ -\beta \end{pmatrix} )$ is proved. This shows, that adding the vector $-(\alpha,-\beta)$ to $(\gamma,\delta)$ cannot decrease the distance from the origin. The cases for $j > 1$ is are similar.\\
For the alternative case $\alpha > \beta$ and $\delta  \geq \gamma$ we look at the symmetric cycloid $\zyk(\beta,\alpha,\delta,\gamma) $ (by interchanging $\alpha$ and $\beta$, as well as $\gamma$ and $\delta$).\\
d) If $\beta | \delta$ then Equation (\ref{vec-x-j-a}) becomes \\
$\vec{x}{_j} = j \cdot \begin{pmatrix} \gamma \\\delta \end{pmatrix} + \frac{j \cdot \delta}{\beta} \cdot \begin{pmatrix} \alpha \\ - \beta \end{pmatrix} = 
\begin{pmatrix} j\cdot \gamma+j\cdot\frac{\delta}{\beta} \cdot \alpha\\ j\cdot \delta+j\cdot\frac{\delta}{\beta} \cdot (-\beta) \end{pmatrix} = 
j \cdot\begin{pmatrix} \gamma+\frac{\delta}{\beta} \cdot \alpha\\ 0\end{pmatrix} $. \\
 Since all the points for different $j$ are on the $\xi$-axis, for $j=1$ we obtain a minimal value of  $cyc = \gamma+\frac{\delta}{\beta} \cdot \alpha$.\\
e) Again, for the alternative case $\alpha > \beta$ and $\alpha  |  \gamma$ we look at the symmetric cycloid. 
\end{proof}

 To illustrate part c) of Theorem \ref{minimal cycles} consider
  the cycloid  $ \mathcal{C}( 1,1,4,1) $. The points 
  $\vec{x} = j \cdot \begin{pmatrix} \gamma \\\delta \end{pmatrix} + i \cdot \begin{pmatrix} \alpha \\ - \beta \end{pmatrix} 
  = \begin{pmatrix} 4 \\ 1 \end{pmatrix} + i \cdot \begin{pmatrix} 1 \\ - 1 \end{pmatrix} 
  = \begin{pmatrix} u \\ v \end{pmatrix} $ are 
  $\begin{pmatrix} 5 \\ 0 \end{pmatrix} $ and
   $\begin{pmatrix} 4 \\ 1 \end{pmatrix} $ for $i = 1$ and $i = 0$, respectively, both resulting in $cyc=u+v= 5$.
   On the other side for the values $i = -1, -2, -3, -4$ we obtain 
   $\begin{pmatrix} 3 \\ 2 \end{pmatrix} $,
   $\begin{pmatrix} 2\\ 3 \end{pmatrix} $,
   $\begin{pmatrix} 1 \\ 4 \end{pmatrix} $  and
   $\begin{pmatrix} 0 \\ 5 \end{pmatrix} $, respectively, with the same result for $cyc$. Hence $i =  \lfloor \frac{\delta}{\beta}\rfloor = \lfloor \frac{1}{1}\rfloor = 1$
   is suffient. The cases for $j > 1$  are similar.

 The pattern of Figure \ref{cyc-theorem} is derived from  the cycloid  $ \mathcal{C}( 1,2,5,3 ) $. The point $C_3$ is computed by the following formula, as derived in the preceeding proof: $\vec{x}{_3} = 3 \cdot \begin{pmatrix} \gamma \\\delta \end{pmatrix} + \lfloor \frac{3 \cdot \delta}{\beta}\rfloor \cdot \begin{pmatrix} \alpha \\ - \beta \end{pmatrix}  =
  3 \cdot \begin{pmatrix} 5 \\ 3 \end{pmatrix} + \lfloor \frac{3 \cdot 3}{2}\rfloor \cdot \begin{pmatrix} 1 \\ - 2 \end{pmatrix}  = \begin{pmatrix} 19 \\ 1  \end{pmatrix} $, leading to 
  $cyc_3 = 20$. The calues for $C_2$ and $C_1 = R$ are $\begin{pmatrix} 13 \\ 0  \end{pmatrix} $ and $\begin{pmatrix} 6 \\ 1  \end{pmatrix} $, respectively.
  
  For the cycloid  $ \mathcal{C}( 2,8,1,4) $ we obtain $cyc = 5$ by Theorem \ref{minimal cycles} c).\\ However, using 
  $  \mathbf{A} \cdot  \begin{pmatrix} i \\ j  \end{pmatrix} = 
  \begin{pmatrix} 2 & 1 \\ -8 & 4 \end{pmatrix}\cdot  \begin{pmatrix} 1 \\ 2 \end{pmatrix} =
  \begin{pmatrix} 4 \\ 0  \end{pmatrix} $ we obtain $cyc = 4$, giving a counter-example to part c) of the theorem.


\section{Conclusion}\label{sec-conclusion}
Using \emph{Cycloid Algebra} a new proof for some important net isomorphisms of cycloids and the problem of equivalence is derived. By the same method also a new proof for the minimal length of a cycloid cycle is obtained, which extends the formula from \cite{Valk-2019}. This approach makes proofs simpler, as otherwise more complicated and combinatorial methods were used.

\bibliographystyle{splncs03}
\bibliography{citations-rv}

\begin{thebibliography}{1}
\providecommand{\url}[1]{\texttt{#1}}
\providecommand{\urlprefix}{URL }

\bibitem{fenske-da}
Fenske, U.: {P}etris {Z}ykloide und {{\"U}}berlegungen zur {V}erallgemeinerung.
  {D}iploma {T}hesis (2008)

\bibitem{Kummer-Stehr-1997}
Kummer, O., Stehr, M.O.: Petri's {A}xioms of {C}oncurrency - a {S}election of
  {R}ecent {R}esults. In: Application and {T}heory of {P}etri {N}ets 1997.
  Lecture Notes in Computer Science, vol. 1248, pp. 195 -- 214.
  Springer-Verlag, Berlin (1997)

\bibitem{Petri-NTS}
Petri, C.A.: {Nets, Time and Space}. Theoretical Computer Science  (153),
  3--48 (1996)

\bibitem{SR:SemNN:87}
Smith, E., Reisig, W.: The semantics of a net is a net -- an exercise in
  general net theory. In: Voss, K., Genrich, J., Rozenberg, G. (eds.)
  Concurrency and Nets. pp. 461--479. Springer-Verlag, Berlin (1987)

\bibitem{Valk-2019}
Valk, R.: Formal {P}roperties of {P}etri's {C}ycloid {S}ystems. Fundamenta
  Informaticae  169,  85--121 (2019)

\bibitem{Valk-2020}
Valk, R.: Circular {T}raffic {Q}ueues and {P}etri's {C}ycloids. In: Application
  and {T}heory of {P}etri {N}ets and {C}oncurrency. Lecture Notes in Computer
  Science, vol. 12152, pp. 176 -- 195. Springer-Verlag, Berlin (2020)

\end{thebibliography}
\end{document}